\documentclass[manuscript]{acmartinoc2026}
\def\BibTeX{{\rm B\kern-.05em{\sc i\kern-.025em b}\kern-.08em
    T\kern-.1667em\lower.7ex\hbox{E}\kern-.125emX}}

\usepackage{booktabs} 
\usepackage{adjustbox} 
\usepackage{subcaption}
\usepackage{siunitx}
\usepackage{tikz}
\usetikzlibrary{positioning}
\usetikzlibrary{calc}
\usetikzlibrary{fit}
\usetikzlibrary{arrows.meta}
\usepackage{algorithm}
\usepackage[noend]{algorithmic}

\begin{document}
\title{An improved Lower Bound for Local Failover in Directed Networks via Binary Covering Arrays}

\author{Erik van den Akker}
\orcid{0000-0003-4956-255X}
\affiliation{%
  \institution{TU Dortmund University}
  \country{Germany}
}
\email{erik.vandenakker@tu-dortmund.de}

\author{Klaus-Tycho Foerster}
\orcid{0000-0003-4635-4480}
\affiliation{%
  \institution{TU Dortmund University}
  \country{Germany}
}
\email{klaus-tycho.foerster@tu-dortmund.de}

\renewcommand{\shortauthors}{}
 
\setlength{\parindent}{0pt}

\begin{abstract}
Communication networks often rely on some form of local failover rules for fast forwarding decisions upon link failures. While on undirected networks, up to two failures can be tolerated, when just matching packet origin and destination, on directed networks tolerance to even a single failure cannot be guaranteed. Previous results have shown a lower bound of at least $\lceil\log(k+1)\rceil$ rewritable bits to tolerate $k$ failures. 

We improve on this lower bound for cases of $k\geq 2$, by constructing a network, in which successful routing is linked to the \textit{Covering Array Problem} on a binary alphabet, leading to a lower bound of $\Omega(k + \lceil\log\log(\lceil\frac{n}{4}\rceil-k)\rceil)$ for $k$ failures in an $n$ node network.

\end{abstract}

\begin{CCSXML}
<ccs2012>
   <concept>
       <concept_id>10003033.10003039.10003045.10003046</concept_id>
       <concept_desc>Networks~Routing protocols</concept_desc>
       <concept_significance>500</concept_significance>
       </concept>
   <concept>
       <concept_id>10010583.10010750.10010751</concept_id>
       <concept_desc>Hardware~Fault tolerance</concept_desc>
       <concept_significance>500</concept_significance>
       </concept>
   <concept>
       <concept_id>10003752.10003809</concept_id>
       <concept_desc>Theory of computation~Design and analysis of algorithms</concept_desc>
       <concept_significance>500</concept_significance>
       </concept>
 </ccs2012>
\end{CCSXML}

\ccsdesc[500]{Networks~Routing protocols}
\ccsdesc[500]{Hardware~Fault tolerance}
\ccsdesc[500]{Theory of computation~Design and analysis of algorithms}

\keywords{fast failover, directed networks, routing}

\maketitle
\section{Motivation and Background}

\paragraph{Motivation.} Fast failover routing enables communication networks to recover from link failures within microseconds~\cite{DBLP:journals/comsur/ChiesaKRRS21}, without waiting for relatively slow global control plane reconvergence~\cite{DBLP:conf/nsdi/LiuPSGSS13}. Such mechanisms are typically implemented locally at each node, using preinstalled forwarding rules that react only to the status of outgoing links and limited packet header information.

\paragraph{Related Work on Undirected Graphs.} 
Feigenbaum et al.~\cite{DBLP:conf/podc/FeigenbaumGPSSS12} have introduced the problem on undirected graphs and have shown that, given only the packet destination $t$, resilience against a single failure is always possible, while perfect resilience (against an arbitrary number of failures) is generally impossible. Chiesa et al.~\cite{DBLP:journals/ton/ChiesaNMGMSS17} improved this bound, showing that even resilience to two failures is not possible on all topologies.  Dai et al.~\cite{DBLP:conf/spaa/DaiF023} showed that, matching on the origin $s$ of the packet in the packet header as well, resilience against two failures is always possible, while resilience against three or more failures cannot be~guaranteed. Moreover, perfect resilience is even impossible on very restricted graph classes~\cite{DBLP:conf/dsn/FoersterHPST22}, though it can be decided in polynomial time for undirected graphs~\cite{bentert2026perfectnetworkresiliencepolynomial}. 

As it is still an open question if resilience against $k$ link failures can be achieved without matching on the packet source in $k$-edge-connected graphs, Chiesa et al.~\cite{DBLP:journals/ton/ChiesaNMGMSS17} proposed a method using $3$-bits in the packet headers, rewritable by routers, such that resilience against $k-1$ failures can be guaranteed in a $k$-edge-connected graph without matching on the source.

\paragraph{Related Work on Directed Graphs.} 
On directed networks, van den Akker and Foerster~\cite{DBLP:conf/sss/AkkerF24} have previously shown that even resilience against a single failure cannot be guaranteed on directed networks, giving a lower bound of $\lceil\log(k+1)\rceil$ needed rewritable header bits, where $k$ is the number of failed arcs in the network. While there is work on heuristic algorithms~\cite{DBLP:conf/rndm/GrobeAF24} without guarantees, we are not aware of any lower bounds that include the network size $n$.

\paragraph{Overview.} In \S\ref{sec:model} we describe the model used for directed failover. In \S\ref{sec:cover} we briefly describe the covering array problem. In \S\ref{sec:bound} we present our new contstruction for a lower bound linked to the covering array problem. Lastly, in \S\ref{sec:conc} we conclude.

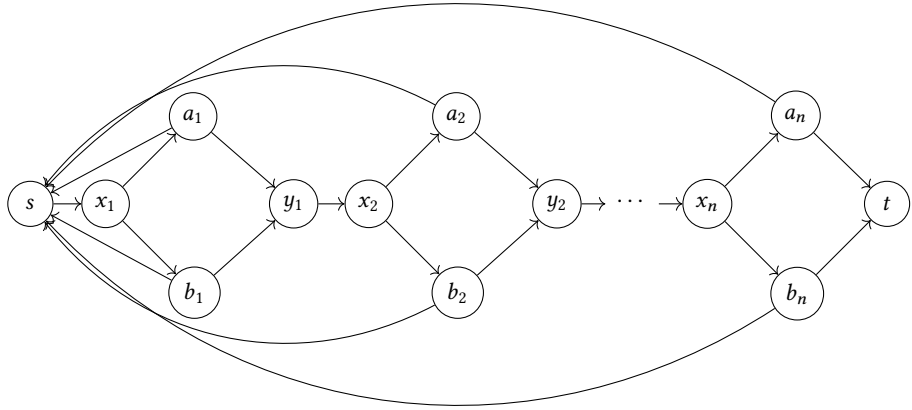
\begin{figure*}[t!]
    \centering
    \begin{tikzpicture}[->, node distance=1cm, 
        every node/.style={circle, draw, minimum size=6mm}]
        
        \node (S) {$s$};
        \node (X1) [right of=S] {$x_1$};
        \node (A1) [above right=of X1] {$a_1$};
        \node (B1) [below right=of X1] {$b_1$};
        \node (Y1) [right of=X1, xshift=1.5cm] {$y_1$};
        \node (X2) [right of=Y1] {$x_2$};
        \node (A2) [above right=of X2] {$a_2$};
        \node (B2) [below right=of X2] {$b_2$};
        
        \draw (S) -- (X1);
        
        \draw (X1) -- (A1);
        \draw (X1) -- (B1);
        \draw (A1) -- (Y1);
        \draw (B1) -- (Y1);
        \draw (Y1) -- (X2);
        
        \draw (X2) -- (A2);
        \draw (X2) -- (B2);
        \node (Y2) [right of=X2, xshift=1.5cm] {$y_2$};
        
        \node (dots) [right of=Y2, draw=none] {$\cdots$};
        \draw (A2) -- (Y2);
        \draw (B2) -- (Y2);
        \draw (Y2) -- (dots);
      
        \node (Xn) [right of=dots] {$x_n$};
        \node (An) [above right=of Xn] {$a_n$};
        \node (Bn) [below right=of Xn] {$b_n$};
        \node (Yn) [right=of Xn,  xshift=0.75cm] {$t$};
        
        \draw (dots) -- (Xn);
        \draw (Xn) -- (An);
        \draw (Xn) -- (Bn);
        \draw (An) -- (Yn);
        \draw (Bn) -- (Yn);

        \draw (A1) -- (S);
        \draw (B1) -- (S);

        \draw (A2) to[bend right=40] (S);
        \draw (B2) to[bend left=40] (S);

         \draw (An) to[bend right=40] (S);
        \draw (Bn) to[bend left=40] (S);
    \end{tikzpicture}
    \caption{Network for the lower bound construction. All $a_i$ and $b_i$ have an arc back to $s$. Any of the arcs $a_i \rightarrow x_{i+1}$ can fail (but not both at the same time), there needs to be at least one failover path that does not run into any failed link.}
    \label{fig:chain}
    \vspace{2mm}
\end{figure*}

\section{Model}\label{sec:model}
\paragraph{Network and Forwarding Model.} In the directed fast failover problem, a network is given as a directed graph $G = (V,E)$, where the nodes represent the routers and the arcs represent the links of the network. For each router, a set of forwarding rules $in \times f \times s \times t \times b \Rightarrow out, b'$ is installed, which determine the next hop and modification of a bitstring in packet headers, by looking at the incoming arc $in$ used by the packet, the source $s$ and target $t$ of the packet, as well as the current bitstring $b$ and the local set of failures of outgoing links $f$. The packet will then be forwarded via the outgoing arc $out$ and the bitstring in the packet header will be rewritten to be $b'$.

\paragraph{Deadend Avoidance.} Candidate pairs of source and target $s,t \in V$ are only considered, if the following condition holds: All nodes that are reachable from $s$ (including $s$) must have some directed path to $t$. A set of arcs $F \subset E$ is only allowed to fail, when in $G' = (V, E - F)$ this condition still holds for $s$ and $t$. This restriction prevents the creation of deadends, which would render the problem impossible to solve on general~topologies.

\paragraph{Goal.} The goal is to construct a set of forwarding rules for each node, so that for each valid pair of source $s$ and target $t$, packets are correctly forwarded in $G$ as well as each $G' = (V, E - F)$ for each arc set $F$ which can be removed without violating the reachability~condition.

\paragraph{Remark:} Note that upper bounds for this problem (with multiple arc failures) can be transferred to the undirected problem, by considering an undirected graph a directed graph with arcs in both directions, in which both arcs always have to fail simultaneously.

\section{The Covering Array Problem}\label{sec:cover}
\paragraph{Overview.} We next define the covering array problem, give an example and cite a lower bound we will later use for our problem.
We refer to Hartman and Raskin~\cite{DBLP:journals/dm/HartmanR04} for a more detailed introduction.
\paragraph{Definition.} A Covering Array $CA(N;t,l,v)$ is a $N (rows) \times l (columns)$ Array, where each possible $t$-tuple from an Alphabet of size $v$ has to appear in each possible $N\times t$ subarray. The goal is to discover the minimum $N$ with which a $N \times l$ Covering Array can be created. This minimum $N$ is called the Covering Array Number $CAN(t,l,v)$.

\paragraph{Example.} Consider the following array over the alphabet $\{0,1\}$:

\begin{verbatim}

                       0	0	0	0
                       0	1	1	1
                       1	0	1	1
                       1	1	0	1
                       1	1	1	0
                       
\end{verbatim}

Here, each pair of two columns picked, 
contains all possible tuples $(0,0),(0,1),(1,0),(1,1)$, i.e., it is a Covering Array~$CA(5;2,4,2)$.

It is known that $CAN(2,4,2) = 5$, so the previous example is an optimal covering array for this case\cite{CHOI20122958}.

\paragraph{Lower Bound.} 
For $CAN(t,l,v)$ a lower bound of $$v^{t-2}\frac{v}{2}\log{(l-t+1)}(1+o(1))$$ is known, see Sarkar et al.~\cite{DBLP:journals/mst/SarkarCBV18}, for which we can fix $v=2$ since we only cover binary arrays, resulting in a lower bound of $$2^{t-2}\log(l-t+1)(1+o(1))~.$$\\

\section{Lower bound construction}\label{sec:bound}

\paragraph{Overview.} We now present a new construction that establishes stronger lower bounds on the number of bits required in packet headers. In particular, this construction demonstrates that the previously assumed bound of $\lceil \log(k+1) \rceil$ for $k$ failures~\cite{DBLP:conf/sss/AkkerF24} can be improved, even for two failures. 

To this end, we first construct a family of lower bound chain networks and then show that we need to have a sizable number of bits present in the packet to navigate these networks toward the source under failures.

\paragraph{Lower Bound Chain Networks.} Consider the network depicted in Figure~\ref{fig:chain}, where packets needs to be routed from $s$ to $t$. At each node $x_i$, a forwarding decision must be made: the packet is either sent to $y_i$ via $a_i$ or via $b_i$. If a failure occurs along the chosen path, the packet is returned to the source $s$. Crucially, once the packet returns to $s$, the routers at any $x_i$ cannot retain any state information beyond the modified bitstring in the packet header. Since this construction can be extended to arbitrary length, we denote a graph with $l$ such decision points $x_i$ as a chain of length $l$. Note that for an $n$ node Network $l$ can be at most $\lceil\frac{n}{4}\rceil-1$.

\paragraph{Advice needed at the source.} When routing towards the target $t$, if a packet runs into a failed arc from some $a_i$ or $b_i$ to $y_i$ (or $t$), it is returned to $s$, using the only intact arc. Thus, to reach $t$, there must be some configuration of the bitstring while the packet is located at $s$, leading to a chain of routing decisions, that avoids all failures and routes the packet to $t$ without looping via $s$.

Given a bitstring of $h$ header bits, there are at most $2^h$ such configurations the packet can have at $s$. If an adversary looks at all $s-t$ paths resulting from these configurations as a chain of $a$ or $b$ decisions (e.g. $abaa$ for a chain with $l=4$) and finds $k$ columns where some $k$-tuple from $\{a,b\}$ is not covered, they can break the opposite of the associated arcs, since all possible paths resulting from all configurations use at least one of those to reach $t$.

This means, that to survive all possible failure sets, each possible $k$-tuple from $\{a,b\}$ has to appear in each set of $k$ columns, over the paths taken by each configuration from $s$. 

This construction maps exactly to the Covering Array Problem over an alphabet of size two, with $CAN(k,l,2)$ being the minimum amount of header configurations needed to survive all possible sets of failures, leading to a minimum required header size of $\lceil\log(CAN(k,l,2))\rceil$ bits. From this follows:

\begin{theorem}
For each number of failures $k$ there are networks of size at least $4k+5$, such that $\Omega(k + \lceil\log\log(\lceil\frac{n}{4}\rceil-k)\rceil)$ header bits are needed to forward packets from $s$ to $t$. 
\end{theorem}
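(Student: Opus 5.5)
We take the chain network of Figure~\ref{fig:chain} with $l = \lceil\frac{n}{4}\rceil - 1$ decision points and use the reduction from the preceding paragraphs. That reduction says every correct rule set needs at least $CAN(k,l,2)$ distinct header configurations at $s$, and hence at least $\lceil\log CAN(k,l,2)\rceil$ header bits. The theorem then comes from plugging in the lower bound of Sarkar et al.\ with $v=2$, namely $CAN(k,l,2)\ge 2^{k-2}\log(l-k+1)(1+o(1))$, and taking logarithms.

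\textbf{Step 1: the failure sets are admissible.} A set $F$ of at most $k$ failures, at most one of $a_i\to y_i$ or $b_i\to y_i$ per index $i$ (with $y_l = t$), must respect the deadend-avoidance condition. For each $i$, one of $a_i, b_i$ still reaches $y_i$, so $t$ stays reachable from $x_i$ and hence from $s$. The node whose outgoing chain arc failed still has its arc back to $s$, and $s$ reaches $t$. So every node reachable from $s$ still has a path to $t$, and all such $F$ are admissible.

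\textbf{Step 2: the covering-array condition is necessary.} Nodes $x_i$ cannot see downstream failures, because their local failure set $f$ only concerns their own outgoing arcs, which never fail. Hence, for each header configuration $b$ held at $s$, the forward run from $s$ is a fixed $a/b$ word of length $l$, independent of $F$. Any failure detour returns the packet to $s$ with some header. So delivery requires that some configuration $b$ reachable at $s$ yields a word avoiding $F$. Only $2^h$ configurations exist, giving at most $2^h$ rows. If some $k$ columns miss a $k$-tuple, the adversary fails the arcs that tuple would use, and every row then hits a failure. Thus the rows form a $CA(N;k,l,2)$ with $N\le 2^h$.

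\textbf{Step 3: arithmetic.} From $h\ge \log CAN(k,l,2)\ge (k-2)+\log\log(l-k+1)+\log(1+o(1))$ we get $h=\Omega(k+\log\log(\lceil n/4\rceil-k))$, using $l-k+1=\lceil n/4\rceil-k$. The size requirement $n\ge 4k+5$ ensures $l\ge k+1$, so $k$ columns exist and $\log(l-k+1)\ge 1$. For small $l-k$ the $\Omega(k)$ term dominates anyway, which handles the fact that the $o(1)$ term is only asymptotic. One can also use directly the trivial bound $CAN(k,l,2)\ge 2^k$, since every $k$-tuple must appear in any $k$ columns.

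\textbf{Main obstacle.} The hardest step is Step 2, specifically the claim that the forward path from $s$ depends only on the header at $s$. One must argue carefully that rules at $x_i$, $a_i$, $b_i$ may depend on $in$, $s$, $t$, $b$ and local $f$, but that $f$ is empty at every $x_i$. Rules may also rewrite bits mid-path, but that is deterministic given the header at $s$. So the path is a function of the configuration at $s$, and a packet leaving from the original source $s$ is covered because it starts in one such configuration.
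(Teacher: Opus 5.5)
Your route is the paper's: the chain network, the reduction of the forward paths out of $s$ to the rows of a binary covering array of strength $k$, then the Sarkar et al.\ bound and a logarithm. Steps 1 and 3 are in places more careful than the paper. The paper only asserts admissibility, and its final displayed inequality silently drops the $-2$ coming from $2^{k-2}$.

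One sentence of Step 2 is wrong as written: you let the adversary fail ``the arcs that tuple would use''. Suppose $\tau$ is the missing tuple on columns $i_1,\dots,i_k$ and you fail the arcs of the choices $\tau_j$. Then a row survives exactly when it agrees with the complement $\bar\tau$ on all these columns, and $\tau$ being missing says nothing about $\bar\tau$. For example, take $k=2$ with the rows restricted to columns $i,j$ equal to $\{aa,ab,ba\}$, so the missing tuple is $bb$. Failing $b_i\to y_i$ and $b_j\to y_j$ leaves the row $aa$ intact. The adversary must instead fail the arcs of the \emph{opposite} choices, so that $\tau$ is the only surviving pattern. Here that means failing $a_i\to y_i$ and $a_j\to y_j$, which blocks all three rows; this is the paper's ``break the opposite of the associated arcs''.

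\textbf{Rows in Step 2.} Two small repairs are needed.
\begin{itemize}
\item Index rows by the header on arrival at $x_1$, not the header ``held at $s$'', since the rule at $s$ may depend on the incoming arc.
\item Keep only states whose failure-free run reaches $t$. The rules may send a packet back to $s$ from some $a_i$ or $b_i$ voluntarily, and then the run is not a word of length $l$. Such a state still returns to $s$ under each of your failure sets, so discarding it is harmless.
\end{itemize}

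\textbf{The case $k=1$.} The Sarkar et al.\ bound requires $k\ge 2$. For $k=1$ one has $CAN(1,l,2)=2$ (rows $a^l$ and $b^l$), so a single bit suffices on this network and it cannot yield the $\log\log$ term. This is why the abstract restricts to $k\ge 2$; the paper's proof shares the omission.

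\textbf{Removing the $o(1)$.} For $k\ge 2$ you can avoid the asymptotic $o(1)$ entirely. A strength-$2$ binary covering array has pairwise distinct columns, so $CAN(k,l,2)\ge\max(2^k,\log_2 l)$. Hence $h\ge\max(k,\log\log l)\ge\frac12(k+\log\log l)$, which gives the claimed $\Omega$ directly.
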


\begin{proof}
We construct a network with $l > k$, by using the previously mentioned construction with at least $k+1$ decision points, requiring at least $4k+5$ nodes.

Since we know that this maps to a $CAN(k,l,2)$ and the lower bound for any $CAN(t,l,2)$ is $$2^{t-2}\log(l-t+1)(1+o(1))$$ and we have $t=k$ and $l=\lceil\frac{n}{4}\rceil-1$, we get a lower bound of $$\lceil\log(2^{k-2}\log(\lceil\frac{n}{4}\rceil-k)(1+o(1))\rceil \geq k + \lceil\log\log(\lceil\frac{n}{4}\rceil-k)\rceil~.$$ 
\end{proof}

\section{Conclusion}\label{sec:conc}
We have shown a new lower bound for the minimum amount of rewritable bits needed for local failover routing on directed graphs, by leveraging a connection between the construction and the covering array problem on an alphabet of size $2$, leading to a lower bound of $\Omega(k + \lceil\log\log(\lceil\frac{n}{4}\rceil-k)\rceil)$ for $k$ arc failures in an $n$ node network (with at least $4k+5$ nodes).

Future Work could study, how and if a construction could make use of larger alphabet sizes in context of covering arrays to further improve this lower bounds. Additionally the upper bound for more than one failure still is the trivial upper bound given by just memorizing all encountered failures in the packet headers and relatively large with $k * \log E$ bits, so it would be interesting to know if this upper bound can be improved.

\newpage
\bibliographystyle{ACM-Reference-Format}
\bibliography{literature}

%

\end{document}